\begin{document}

\preprint{}

\title
{Local unitary classification of sets of generalized Bell states
in $\mathbb{C}^{d}\otimes\mathbb{C}^{d}$}

\author{Cai-Hong Wang$^{1}$}%
\author{Jiang-Tao Yuan$^{1}$}%
\email[]{jtyuan@cwxu.edu.cn}
\author{Mao-Sheng Li$^{2}$}
\author{Ying-Hui Yang$^{3}$}
\author{Shao-Ming Fei$^{4}$}

\affiliation{%
$^{1}$Department of General Education, Wuxi University, Wuxi, 214105, China\\
$^{2}$School  of Mathematics, South China University of Technology, GuangZhou 510640, China\\
$^{3}$School of Mathematics and Information Science, Henan Polytechnic University, Jiaozuo, 454000, China\\
$^{4}$School of Mathematical Sciences, Capital Normal University, Beijing 100048, China\\
}%

\date{\today}

\begin{abstract}
Two sets of quantum entangled states that are equivalent under local unitary transformations may exhibit identical effectiveness and versatility in various quantum information processing tasks. Consequently, classification under local unitary transformations has become a fundamental issue in the theory of quantum entanglement. The primary objective of this work is to establish a complete LU-classification of all sets of generalized Bell states (GBSs) in bipartite quantum systems $\mathbb{C}^{d}\otimes \mathbb{C}^{d}$ with $d\geq 3$. Based on this classification, we determine the minimal cardinality of indistinguishable GBS sets in $\mathbb{C}^{6}\otimes \mathbb{C}^{6}$ under one-way local operations and classical communication (one-way LOCC). We propose first two classification methods based on LU-equivalence for all $l$-GBS sets for $l\geq 2$. We then establish LU-classification for all 2-GBS, 3-GBS, 4-GBS and 5-GBS sets in $\mathbb{C}^{6}\otimes \mathbb{C}^{6}$. Since LU-equivalent sets share identical local distinguishability, it suffices to examine representative GBS sets from equivalent classes.  Notably, we identify a one-way LOCC indistinguishable 4-GBS set among these representatives, thereby resolving the case of $d = 6$ for the problem of determining the minimum cardinality of one-way LOCC indistinguishable GBS sets in [Quant. Info. Proc. 18, 145 (2019)] or [Phys. Rev. A 91, 012329 (2015)].
\end{abstract}

\pacs{03.65.Ud, 03.67.Hk}
\maketitle


\section{Introduction}
Quantum entanglement plays a key role in quantum information processing and quantum computation. We say that two quantum states in $\mathbb{C}^{d}\otimes\mathbb{C}^{d}$ are local unitary (LU) equivalent if they can be transformed into each other by local unitary operations. Sets of quantum states which are LU-equivalent have the same effectiveness and versatility in many quantum information processing tasks such as quantum secret sharing \cite{raha2015pra,yang2015sci}. And the distinguishability of sets of orthogonal quantum states under local operations and classical communication (LOCC) is exactly the same as the LU equivalence \cite{fan2004prl,nath2005jmp,band2011njp,zhang2014qip,wang2019pra,yang2018pra}. Therefore, an effective way to understand entanglement is to classify entanglement under LU-equivalence. At present, the study on the LU-equivalence of orthogonal quantum state sets is limited to the sets of generalized Bell states (GBSs) in bipartite quantum systems. Since there is a one-to-one correspondence between the maximally entangled states (MESs) and the unitary matrices, specifically, each GBS corresponds to a generalized Pauli matrix (GPM), the LU-equivalence of GBS sets can be expressed in terms of the unitary equivalence (U-equivalence) of GPM sets.

Singal et al. \cite{sing2017pra} first considered the LU-equivalence of 4-GBS sets in $\mathbb{C}^4\otimes \mathbb{C}^4$. All 1820 (=$C_{16}^{4}$) 4-GBS sets in $\mathbb{C}^4\otimes \mathbb{C}^4$ are divided into 122 equivalence classes based on certain specific LU-equivalence. However, this classification is incomplete as different classes may still be LU-equivalent. Subsequently, Tian et al. \cite{tian2016pra} classified all 4-GBS sets in $\mathbb{C}^4\otimes \mathbb{C}^4$ into 10 equivalence classes by using a number of LU-equivalence related to Clifford operators \cite{hos2005pra,fari2014jpa}. This classification is exhaustive, i.e., different classes are not LU-equivalent to each other.
Based on the complete classification, in \cite{yuan2022quantum} the authors completely resolved the local discrimination problem of GBS sets in $\mathbb{C}^4\otimes \mathbb{C}^4$. In \cite{wang2021jmp} Wang et al. discussed the LU-classification of GBS sets in $\mathbb{C}^{5}\otimes\mathbb{C}^{5}$. All 12650(=$C_{25}^{4}$) 4-GBS sets are classified into 8 equivalence classes, and all 53130(=$C_{25}^{5}$) 5-GBS sets are classified into 21 equivalence classes.

Quantum state discrimination is a fundamental problem in quantum information theory. Adding orthogonal states to a LOCC indistinguishable set (or a nonlocal set) gives rise still to a nonlocal set. Therefore, the smaller of the cardinality of a nonlocal set means better construction of a nonlocal set. Hence, it is interesting to ask how small a nonlocal set could be in a given systems. To solve this problem, Zhang et al. \cite{zhang2015pra} introduced the cardinality function $f(d)$, where $f(d)$ is the minimum cardinality of one-way LOCC indistinguishable MES sets in $\mathbb{C}^d\otimes \mathbb{C}^d$, and proved that the minimum cardinality $f(d)$ does not exceed $\frac{d+5}{2}$ ($d$ is odd) or $\frac{d+4}{2}$ ($d$ is even). Wang et al. \cite{wang2016qip} reported that the minimum cardinality $f(d)$ satisfies $f(d)\leq 4$. In this way, important progress has been made towards the minimum cardinality problem of one-way LOCC indistinguishable MES sets. In \cite{yuan2019qip}, the authors considered the cardinality function $f_{GBS}(d)$, where $f_{GBS}(d)$ is the minimum cardinality of one-way LOCC indistinguishable GBS sets in $\mathbb{C}^d\otimes \mathbb{C}^d$. Up to now, only the minimum cardinalities $f_{GBS}(2)=3$, $f_{GBS}(3)=4$, $f_{GBS}(4)=f_{GBS}(5)=4$ and $f_{GBS}(7)=5$ have been obtained \cite{nath2005jmp,gho2004pra,band2011njp,wang2017qip,tian2016pra,yuan2019qip}.
Other minimum cardinalities including $f_{GBS}(6)$ is still unknown.

In this work, we consider the LU-classification of GBS sets in $\mathbb{C}^{d}\otimes \mathbb{C}^{d}$ and determine the exact value of the minimum cardinality $f_{GBS}(6)$. Specifically, two LU-classification methods based on Clifford operators which are applicable to all bipartite systems are proposed. Using these classification methods, all $58905\ (=C_{36}^{4})$ 4-GBS sets (respectively, all 376992 5-GBS sets) in $\mathbb{C}^{6}\otimes \mathbb{C}^{6}$ are classified into 31 equivalent classes (respectively, 112 equivalent classes). By examining the representative elements of the classes of 4-GBS sets, we identify a one-way LOCC indistinguishable 4-GBS set. This set demonstrates that the cardinality of the smallest one-way LOCC indistinguishable GBS sets in $\mathbb{C}^{6}\otimes \mathbb{C}^{6}$ is 4, which implies $f_{GBS}(6)=4$.

The rest of this paper is organized as follows. In Section II, we review the notions of the LU-equivalence of two orthogonal MES set, Clifford operators and related results. In Section III, We present two LU-classification methods based on Clifford operators and provide the LU-classification of all 4-GBS and 5-GBS sets in $\mathbb{C}^{6}\otimes \mathbb{C}^{6}$. In Section IV, we illustrate our results by selecting a one-way LOCC indistinguishable 4-GBS set from the representative elements of the equivalence classes. We draw the conclusions in the last section.

\section{Preliminaries}

\newtheorem{definition}{Definition}
\newtheorem{lemma}{Lemma}
\newtheorem{theorem}{Theorem}
\newtheorem{corollary}{Corollary}
\newtheorem{example}{Example}
\newtheorem{proposition}{Proposition}
\newtheorem{problem}{Problem}
\newtheorem{conjecture}{Conjecture}
\newtheorem{remark}{Remark}

\def\QEDclosed{\mbox{\rule[0pt]{1.3ex}{1.3ex}}}
\def\QED{\QEDclosed}
\def\proof{\indent{\em Proof}.}
\def\endproof{\hspace*{\fill}~\QED\par\endtrivlist\unskip}

\newtheorem{procedure}{Procedure}

Consider a $d$-dimensional Hilbert space $\mathbb{C}^{d}$ with computational basis  $\{|j\rangle\}_{j=0}^{d-1}$. Denote $\mathbb{Z}_{d}=\{0,1,\ldots,d-1\}$. Let $U_{m,n}=X^{m}Z^{n}$, $m, n\in\mathbb{Z}_{d}$, be the generalized Pauli matrices (GPMs) constituting a basis of unitary operators, where $X|j\rangle=|j+1$ mod $d\rangle$, $Z|j\rangle=\omega^{j}|j\rangle$ and $\omega=e^{2\pi i/d}$ are the generalizations of Pauli matrices. The canonical maximally entangled state $|\Phi\rangle$ in $\mathbb{C}^{d}\otimes\mathbb{C}^{d}$ is given by $|\Phi\rangle=(1/\sqrt{d})\sum_{j=0}^{d-1}|jj\rangle$. It is known that $(I\otimes A)|\Phi\rangle=(A^{T}\otimes I)|\Phi\rangle$, where $A^{T}$ denotes the transposition of matrix $A$. Any MES can be written as $|\Psi\rangle=(I\otimes U)|\Phi\rangle$, where $U$ is a unitary. The states
\begin{eqnarray*}
|\Phi_{m,n}\rangle=(I\otimes U_{m,n})|\Phi\rangle
\end{eqnarray*}
are called generalized Bell states (GBSs), where $U=X^{m}Z^{n}$. Note that there is a one-to-one correspondence between the MESs and unitaries. The corresponding unitaries of GBSs are GPMs. For convenience we denote a GBS set $\{ (I\otimes X^{m_i}Z^{n_i})|\Phi\rangle \}$ by $\{  X^{m_i}Z^{n_i} \}$ or $\{  (m_i, n_i) \}$.
For a given $l$-GBS set $\mathcal{S}=\{X^{m_{i}}Z^{n_{i}}|m_{i}, n_{i}\in\mathbb{Z}_{d}, 1\leq i\leq l\}$ with $2\le l\le d$,
the difference set $\Delta \mathcal{S}$ of $\mathcal{S}$ is defined by
\begin{eqnarray*}
&\Delta \mathcal{S}=\{U_{j}U_{k}^{\dag}|U_{j}, U_{k}\in \mathcal{S}, j\neq k\}.
\end{eqnarray*}
Up to a phase, we can identify $\Delta \mathcal{S}$ as the set
\begin{eqnarray*}
\{(m_{j}-m_{k},n_{j}-n_{k})|(m_j, n_j), (m_k, n_k)\in \mathcal{S}, j\neq k\}.
\end{eqnarray*}

According to the definitions of $l$-GBS set $\mathcal{S}$ and its difference set  $\Delta\mathcal{S}$, each element $U_{j}U_{k}^{\dag}$ in $\Delta\mathcal{S}$ are determined by two different GBSs $U_{j}$ and $U_{k}$ in $\mathcal{S}$. Hence, $\Delta\mathcal{S}$ contains at most $l(l-1)$ GPMs. In the set $\Delta \mathcal{S}$, we use the GPM $(m_{j}-m_{k},n_{j}-n_{k})$ to refer to the element $U_{j}U_{k}^{\dag}$ uniformly. It is easy to check that doing so does not change the commutation relationship between the elements in $\Delta\mathcal{S}$.

Let's recall the LU-equivalence of two MES sets.
\begin{definition}[\cite{wu-tian2018pra}]
Let $\{ |\phi_1\rangle , |\phi_2\rangle , \cdots , |\phi_n\rangle \}$ and $\{ |\psi_1\rangle , |\psi_2\rangle , \cdots , |\psi_n\rangle \}$ be two sets of MESs. Their corresponding unitary matrix sets are $\mathcal{M}=\{M_1, M_2, \cdots , M_n \}$ and $\mathcal{N}=\{N_1, N_2, \cdots , N_n \}$. If there exist two unitary operators $U_A$, $U_B$ and a permutation $\sigma$ over $\{ 1, 2, \cdots , n \}$ such that $|\psi_i\rangle \approx (U_A \otimes U_B) |\phi_{\sigma (i)}\rangle$, where $\approx$ denotes ``up to some global phase", then these two MES sets are called LU-equivalent. Meanwhile the corresponding unitary matrix sets are called $U$-equivalent, that is,
\begin{eqnarray*}
L\mathcal{M}R\approx \mathcal{N},
\end{eqnarray*}
where $L=U_B$ and $R=U_A^T$, denoted by $\mathcal{M}\sim \mathcal{N}$. Especially, when $R=L^\dagger$, the two sets $\mathcal{M}$ and $\mathcal{N}$ are called unitary conjugate equivalent (UC-equivalent), denoted by $\mathcal{M}\mathop{\sim}\limits^L \mathcal{N}$.
\end{definition}

In general we can study the LU-equivalence of two MES sets by studying the $U$-equivalence of their unitary matrix sets. For convenience, we call a GBS  set containing the standard MES $|\Phi\rangle$ a standard GBS set, and a GPM set containing the identity matrix $I$ a standard GPM set. Let $\mathcal{M}=\{M_1, M_2, \cdots , M_n \}$ and $\mathcal{N}=\{N_1, N_2, \cdots , N_n \}$ be two GPM sets. If $\mathcal{M}$ is $U$-equivalent to $\mathcal{N}$ and $\mathcal{N}$ is a standard GPM set, then there exist two unitaries $L$, $R$  and a permutation $\sigma$ over $\{ 1, 2, \cdots , n \}$ such that $LM_iR \approx N_{\sigma (i)}$, and $LM_{i_0}R \approx N_{\sigma ({i_0})}=I$ holds for some $i_0$. Thus $M_{i_0}\approx L^{\dagger} R^{\dagger}$ and for any $i$ we have
\begin{align*}
N_{\sigma (i)}&\approx LM_iR =LM_{i_0}M_{i_0}^{\dagger}M_iR\approx R^{\dagger}(M_{i_0}^{\dagger}M_i)R.
\end{align*}
That is to say, the U-equivalence between a GPM set $\mathcal{M}$ and a standard GPM set $\mathcal{N}$ can be achieved by a left multiplication operation (determined by a GPM $M_{i_0}$ in $\mathcal{M}$) and a UC-transformation (determined by a unitary matrix $R$). We have the following useful characterization of U-equivalence of GPM sets (or LU-equivalence of GBS sets).

\begin{lemma}\label{lem2.1}\label{lem2.1}
For a given GPM set $\mathcal{M}=\{M_1, M_2, \cdots , M_n \}$,
the standard GPM set $\mathcal{N}$ that is U-equivalent to the set $\mathcal{M}$ has the form $\mathcal{N}\approx\{R^{\dagger}(M_{i_0}^{\dagger}M_i)R\}$, where $R$ is a unitary matrix.
\end{lemma}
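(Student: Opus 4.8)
The plan is to read off the normal form directly from the definition of $U$-equivalence, using the single structural hypothesis that distinguishes $\mathcal{N}$ from an arbitrary $U$-equivalent set, namely that $\mathcal{N}$ is \emph{standard} and hence contains the identity. The whole argument is a short rearrangement, so rather than producing new estimates I would organize it so that the two-sided transformation $M_i \mapsto L M_i R$ is converted into a one-sided conjugation together with a fixed left multiplication.

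First I would invoke the definition of $U$-equivalence: since $\mathcal{M}$ is $U$-equivalent to $\mathcal{N}$, there exist unitaries $L$, $R$ and a permutation $\sigma$ of $\{1, \dots, n\}$ with $L M_i R \approx N_{\sigma(i)}$ for every $i$, where $\approx$ means equality up to a global phase. Because $\mathcal{N}$ is standard, some index $i_0$ satisfies $N_{\sigma(i_0)} = I$, so that $L M_{i_0} R \approx I$. Multiplying on the right by $R^{\dagger}$ then gives the key identity $L M_{i_0} \approx R^{\dagger}$ (equivalently $M_{i_0} \approx L^{\dagger} R^{\dagger}$), which pins down the combination $L M_{i_0}$ entirely in terms of $R$.

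The crucial step is to insert the identity $M_{i_0} M_{i_0}^{\dagger} = I$ into the expression for a general member of $\mathcal{N}$: for each $i$,
\begin{align*}
N_{\sigma(i)} \approx L M_i R = \left(L M_{i_0}\right)\left(M_{i_0}^{\dagger} M_i\right) R \approx R^{\dagger}\left(M_{i_0}^{\dagger} M_i\right) R.
\end{align*}
Ranging over $i$ yields $\mathcal{N} \approx \{ R^{\dagger}(M_{i_0}^{\dagger} M_i) R \}$, which is exactly the claimed form; note the consistency check that the $i = i_0$ term reproduces $R^{\dagger} I R = I$, as it must for a standard set.

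I expect the only genuine care to be the bookkeeping of global phases so that every $\approx$ survives the manipulations, since none of the rearrangements involve anything beyond multiplying unitaries and inserting $M_{i_0}^{\dagger} M_{i_0}$. The one conceptual point worth flagging is the content of the identity itself: it effects the passage from a general two-sided $U$-transformation $(L, R)$ to the pair consisting of a left multiplication by $M_{i_0}^{\dagger}$ and the $UC$-transformation (conjugation) $X \mapsto R^{\dagger} X R$. This is precisely the decomposition anticipated in the paragraph preceding the statement, and it is what makes conjugations by $R$ --- later specialized to Clifford operators --- the natural objects for the subsequent classification.
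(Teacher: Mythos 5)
Your argument is correct and coincides with the paper's own derivation, which appears in the paragraph immediately preceding the lemma: the same use of $N_{\sigma(i_0)}=I$ to get $M_{i_0}\approx L^{\dagger}R^{\dagger}$, followed by inserting $M_{i_0}M_{i_0}^{\dagger}$ to rewrite $LM_iR$ as $R^{\dagger}(M_{i_0}^{\dagger}M_i)R$. No substantive differences.
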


Each GPM set is U-equivalent to a standard GPM set via a left multiplication operation. Let us assume that the GPM sets are all standard GPM sets. In order to find all standard GPM sets that are U-equivalent to a given standard GPM set $\mathcal{M}$, according to Lemma \ref{lem2.1},
it is sufficient to find all standard GPM sets that are UC-equivalent to the standard GPM sets $M_{i_0}^\dagger\mathcal{M}\triangleq \{ M_{i_0}^\dagger M_1, \cdots , M_{i_0}^\dagger M_n \}$ $(i_0=1,\cdots , n)$.

A single-qudit Clifford operator is defined as a unitary operator which maps all GPMs (on $\mathbb{C}^{d}$) to all GPMs under conjugation \cite{hos2005pra,fari2014jpa}. The classical (or symplectic) representation $W$ of a single-qudit Clifford operator is a unique $2\times 2$  symplectic matrix (up to a global phase) over $\mathbb{Z}_d$, that is,
\begin{eqnarray*}
W = \left[
\begin{array}{llll}
a_1 &b_1\\
a_2 &b_2
\end{array}
\right],
\end{eqnarray*}
where the entries are over $\mathbb{Z}_d$ and the determinant of $W$ is 1 (mod $d$). For a GPM $(m,n)$, the Clifford operator $W$ maps $(m,n)$ to $(a_1m+b_1n, a_2m+b_2n)$ (or $X^mZ^n\mathop{\sim}\limits^W X^{a_1m+b_1n}Z^{a_2m+b_2n}$) in a unitary conjugate manner.

Inspired by Hashimoto et al. \cite{hashi2021pra}, we consider the form of LU-operations that transform all GBSs to all GBSs, that is, the unitary matrix $R$ in Lemma \ref{lem2.1} that  maps the standard GPM set $M_{i_0}^\dagger\mathcal{M}$ to GPM set $\mathcal{N}$ in a unitary conjugate manner is a Clifford operator. We refer to the aforementioned LU-operations as the Clifford-operators-based operations. We present two Clifford-operators-based classification methods for GBS (or GPM) sets in $\mathbb{C}^{d}\otimes\mathbb{C}^{d}$ in the next section.

\section{Clifford-operators-based classification for GBS sets in  $\mathbb{C}^{d}\otimes\mathbb{C}^{d}$}\label{sec3}
Since the LU-equivalence between GBS sets is equivalent to the U-equivalence between the corresponding GPM sets, we first provide the necessary and sufficient conditions for two GPMs to be UC-equivalent. Then, we present the classification method for 2-GPM sets and the classification method for 3-GPM sets based on 2-GPM sets. This provides us two classification methods for a general $l$-GPM set ($l>2$). Since a general $l$-GPM set $\mathcal{M}$ is  U-equivalent to the standard $l$-GPM set $(X^{s_0}Z^{t_0})^\dagger\mathcal{M}$ through the left multiplication operation determined by the GPM $(s_0,t_0)$ in $\mathcal{M}$, it can be assumed that $\mathcal{M}$ is a standard $l$-GPM set, that is, $\mathcal{M}=\{(0,0), (s_1,t_1), \cdots,(s_{l-1},t_{l-1})\}$.

\subsection{UC-equivalence of two GPMs}
Recall that the order of a group element $x$ refers to the smallest positive integer $m$ such that $x^{m}=I$, or the infinity if $x^{m}\neq I$ for all $m\neq 0$. We denote $O(x)$ the order of $x$. Since the global phase does not affect the LU-equivalence between quantum states,
we introduce the so-called essential order and the essential power that are more suitable for the study of LU-equivalence.

\begin{definition}
Let $U$ be a GPM defined on $\mathbb{C}^d$.
A positive integer $a$ is said to be the \textit{essential order} of $U$ if it is the smallest positive integer such that $U^a\approx I$. We denote by $O_e(U)$ the \textit{essential order}.
When $U\not\approx I$, we call $d/O_e(U)$ the \textit{essential power} of $U$ and denote it as $P_e(U)$. We set $P_e(U)=0$ when $U\approx I$.
\end{definition}

For each GPM $X^sZ^t$ on $\mathbb{C}^d$, it is easy to verify that its essential order $O_e(X^sZ^t)$ is $\frac{d}{gcd(s,t,d)}$ and,
in turn, its essential power $P_e(X^sZ^t)$ is $gcd(s,t,d)$, where $gcd(s,t,d)$ refers to the largest common divisor of $s, t$ and $d$.

\begin{example}
Let $X$ and $Z$ be Pauli matrices on $\mathbb{C}^2$. Then  $O_e(X)=\frac{2}{gcd(1,0,2)}=2=O(X)$.
Since $(XZ)^2=-I$ and $(XZ)^4=I$, we have $O_e(XZ)=2=\frac{2}{gcd(1,1,2)}$ and $O(XZ)=4$.
\end{example}
\begin{example}
Let $X$ and $Z$ be GPMs on $\mathbb{C}^4$. Then  $O_e(XZ)=\frac{4}{gcd(1,1,4)}=4$.
Since $(XZ)^4=-I$ and $(XZ)^8=I$, we have $O(XZ)=8$.
Similarly, it is easy to know that $O_e(XZ^{3})=4$, $(XZ^{3})^4=-I$ and $O(XZ^{3})=8$.
\end{example}
\begin{example}
Let $X$ and $Z$ be GPMs on $\mathbb{C}^6$. Then $O_e(XZ^{5})=6, (XZ^{5})^6=-I$ and $O(XZ^{5})=12$. Similarly, we have $O_e(X^{3}Z^{3})=2$, $(X^{3}Z^{3})^2=-I$ and $O(X^{3}Z^{3})=4$.
\end{example}

\begin{definition}
Let $\mathcal{M}=\{M_1, M_2, \cdots , M_n \}$ be a GPM set and $x=( P_e(U_1), P_e(U_2), \cdots , P_e(U_n) )^T$ be a $n$-dimensional real vector.
The \textit{essential power vector} of $\mathcal{M}$ is defined as $P_e(\mathcal{M})=x^\uparrow$, where $\uparrow$ stands for that the elements in the vector are arranged in ascending order.
\end{definition}

For example, if $\mathcal{M}=\{X, X^3Z^3, Z^4, X^2 \}$ is a GPM set on $\mathbb{C}^6$,
then $P_e(X)=1$, $P_e(X^3Z^3)=3$, $P_e(Z^4)=2$, $P_e(X^2)=2$ and $P_e(\mathcal{M})=(1, 2, 2, 3)^T$.

\begin{lemma}\label{lem3.1}
Two GPMs are UC-equivalent if and only if they have the same essential power. In particular, if two GPM sets are UC-equivalent, then their essential power vectors are the same.
\end{lemma}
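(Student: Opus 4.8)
The plan is to prove the two directions of the single--GPM statement separately and then deduce the ``in particular'' claim for sets by matching essential powers pairwise. Throughout I use the formula recorded just before the lemma, $P_e(X^sZ^t)=\gcd(s,t,d)$, so that ``equal essential power'' means ``equal $\gcd$ with $d$.'' For necessity (UC-equivalent implies equal essential power) I would show that the essential order, hence the essential power, is invariant under conjugation by \emph{any} unitary up to a phase. Indeed, if $N=\lambda LML^{\dagger}$ with $|\lambda|=1$, then $N^{a}=\lambda^{a}LM^{a}L^{\dagger}$ for all $a$, and since $LM^{a}L^{\dagger}$ is a scalar multiple of $I$ exactly when $M^{a}$ is, we get $N^{a}\approx I$ precisely when $M^{a}\approx I$. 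Thus $O_e(N)=O_e(M)$ and $P_e(N)=d/O_e(N)=P_e(M)$, the case $M\approx I$ being immediate.

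The sufficiency direction is the substantive part, and here the symplectic description of Clifford operators is the right tool. Writing $M=X^{s_1}Z^{t_1}$, $N=X^{s_2}Z^{t_2}$ with $g:=\gcd(s_1,t_1,d)=\gcd(s_2,t_2,d)$, it suffices to produce $W\in SL_2(\mathbb{Z}_d)$ with $W(s_1,t_1)^{T}\equiv(s_2,t_2)^{T}\pmod d$, since such a $W$ is the classical representation of a Clifford operator realizing a UC-equivalence. I would prove that every vector $(s,t)$ with $\gcd(s,t,d)=g$ can be driven to the normal form $(g,0)^{T}$ under $SL_2(\mathbb{Z}_d)$. The key step is the classical fact that $(a,b)\in\mathbb{Z}_d^2$ is the first column of some matrix in $SL_2(\mathbb{Z}_d)$ if and only if $\gcd(a,b,d)=1$; this gives transitivity of $SL_2(\mathbb{Z}_d)$ on primitive vectors, because if $M_1e_1=(a,b)^T$ and $M_2e_1=(a',b')^T$ with $e_1=(1,0)^T$, then $M_2M_1^{-1}$ carries one to the other. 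For a general vector, factor $s_i=g\bar{s}_i$, $t_i=g\bar{t}_i$ and put $d_0=d/g$; then $(\bar{s}_i,\bar{t}_i)$ is primitive modulo $d_0$, so some $\overline{W}\in SL_2(\mathbb{Z}_{d_0})$ sends it to $(1,0)^T$, and lifting $\overline{W}$ along the surjection $SL_2(\mathbb{Z}_d)\twoheadrightarrow SL_2(\mathbb{Z}_{d_0})$ to $W\in SL_2(\mathbb{Z}_d)$ yields $W(s_i,t_i)^T\equiv g(1,0)^T=(g,0)^T\pmod d$. Applying this to $i=1,2$ and composing (Clifford operators are closed under composition and inverse) produces the desired $W$.

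The ``in particular'' statement then follows immediately: if $\mathcal{M}\mathop{\sim}\limits^{L}\mathcal{N}$, then $LM_iL^{\dagger}\approx N_{\sigma(i)}$ for some permutation $\sigma$, so each pair $(M_i,N_{\sigma(i)})$ is UC-equivalent and, by the first part, has equal essential power; hence the multisets of essential powers of $\mathcal{M}$ and $\mathcal{N}$ coincide and, after sorting, $P_e(\mathcal{M})=P_e(\mathcal{N})$. I expect the only real obstacle to lie in the sufficiency direction, namely the number theory over the non-field ring $\mathbb{Z}_d$ (the relevant modulus $d=6$ being composite): the ``completion to $SL_2$'' and ``surjectivity of reduction'' facts, though standard, must be invoked with care, and if one prefers to avoid citing them they can be replaced by an explicit Euclidean reduction using the elementary symplectic generators $\left[\begin{smallmatrix}1&k\\0&1\end{smallmatrix}\right]$ and $\left[\begin{smallmatrix}1&0\\k&1\end{smallmatrix}\right]$, which act as $(s,t)\mapsto(s+kt,t)$ and $(s,t)\mapsto(s,ks+t)$.
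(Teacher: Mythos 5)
Your proposal is correct and follows essentially the same route as the paper: necessity via the invariance of the essential power under (phase-adjusted) unitary conjugation, and sufficiency by Clifford-conjugating each GPM to a normal form determined by $\gcd(s,t,d)$. The only difference is in execution of the normal-form step: the paper writes down the reducing symplectic matrices explicitly from two B\'ezout identities (sending $X^sZ^t\to Z^{\gcd(s,t)}\to Z^{\gcd(s,t,d)}$), whereas you invoke transitivity of $SL_2(\mathbb{Z}_d)$ on vectors of fixed content via completion to $SL_2$ and lifting along the reduction map --- both are sound.
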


\begin{proof}
For each  GPM $X^sZ^t$ $(\neq I)$ on $\mathbb{C}^d$, let $gcd(s, t)=b$ and $gcd(b, d)=a=P_e(X^sZ^t)$. Then there exist integers $p_1, q_1, p_2, q_2$ such that $p_1 s+q_1 t=b, p_2d+q_2b=a.$ Denote $\overline{C}_{gcd(s,t)}$ and $\overline{C}_{gcd(d,b)}$ as two Clifford operators with symplectic (or classical) representations of
\begin{eqnarray*}
C_{gcd(s,t)}=\left[ \begin{array}{llll} t/b &-s/b\\ p_1 &q_1 \end{array} \right]\ \hbox{and} \
C_{gcd(d,b)}=\left[ \begin{array}{llll} b/a &-d/a\\ p_2 &q_2 \end{array} \right],
\end{eqnarray*}
respectively. We get $X^sZ^t\mathop{\sim}\limits^{\overline{C}_{gcd(s,t)}} Z^b \mathop{\sim}\limits^{\overline{C}_{gcd(d,b)}} Z^a=Z^{P_e(X^sZ^t)}$.
Therefore, two nontrivial GPMs $X^sZ^t$ and $X^{s^{\prime}}Z^{t^{\prime}}$ are UC-equivalent if and only if
$Z^{P_e(X^sZ^t)}$ and $Z^{P_e(X^{s^{\prime}}Z^{t^{\prime}})}$ are UC-equivalent. Since $P_e(X^sZ^t)$ and $P_e(X^{s^{\prime}}Z^{t^{\prime}})$ are factors of $d$ and the essential power of a GPM remains unchanged after a unitary conjugate transformation, it follows that $Z^{P_e(X^sZ^t)}$ and $Z^{P_e(X^{s^{\prime}}Z^{t^{\prime}})}$ are UC-equivalent if and only if $P_e(X^sZ^t)=P_e(X^{s^{\prime}}Z^{t^{\prime}})$, which completes the proof.
\end{proof}

We denote the composition $\overline{C}_{gcd(d,b)}\circ\overline{C}_{gcd(s,t)}$ as $\overline{C}_{gcd(s, t, d)}$, which is a Clifford operator that transforms $X^sZ^t$ to $Z^{gcd(s, t, d)}$. Lemma \ref{lem3.1} shows that two GPM sets with different essential power vectors are not UC-equivalent, and the UC-equivalence between two GPMs is determined by Clifford operators.

\begin{example}
Let $\mathcal{M}=\{X^{12}, Z^3, X^3Z^4, X^5Z^{15} \}$ and $\mathcal{N}=\{X^4Z^6, X^6Z^{12}, X^2, X^3Z^5 \}$ be GPM sets on $\mathbb{C}^{30}$. Then the essential power vectors $P_e(\mathcal{M})=(1, 3, 5, 6)^T$ and $P_e(\mathcal{N})=(1, 2, 2, 6)^T\neq P_e(\mathcal{M})$.
$\mathcal{M}$ and $\mathcal{N}$ are not UC-equivalent to each other.
\end{example}

For any two standard $n$-GPM sets $\mathcal{M}$ and $\mathcal{N}$ on $\mathbb{C}^p$ (where $p$ is an odd prime number),
since the essential power of each nontrivial GPM is 1,
it is obvious that they have the same essential power vector.
However, they are not necessarily UC-equivalent to each other. For example,
the  $5$-GPM sets $\mathcal{M}=\{I, Z, Z^2, Z^3,Z^4\}$ and $\mathcal{N}=\{I, Z, Z^2, X,X^4\}$ on $\mathbb{C}^5$ are  not UC-equivalent \cite[Theorem 4]{wang2021jmp}.
This example illustrates that the UC-equivalence between two GPM sets is more complex than the UC-equivalence between two {\color{blue}GPMs.}

\subsection{Classification of 2-GPM sets}
Let $\mathcal{M}=\{(0,0), (s_1,t_1)\}$ be a standard 2-GPM set. We show how to find all standard 2-GPM sets that are U-equivalent to it through Clifford-operators-based operations (we denote all such GPM sets as $\mathcal{CU}(\mathcal{M}$)).
Since a Clifford-operators-based operation on $\mathcal{M}$ is determined by left multiplication operations and Clifford operators, we provide the following program that can be used to find the Clifford-operators-based equivalence class $\mathcal{CU}(\mathcal{M}$).

\begin{procedure}[Clifford-operators-based equivalence class $\mathcal{CU}(\mathcal{M}$)]\label{proc3.1}
Let $\mathcal{M}= \{(0,0), (s_1,t_1)\}$ be an arbitrary standard 2-GPM set.
\begin{enumerate}
\item[{\rm(1)}] Let $\mathcal{M}_{1}=(s_1,t_1)^{\dag}\mathcal{M}=\{(-s_1,-t_1),(0,0)\}$. Then $\mathcal{M}_{1}$ is a standard GPM set.
\item[{\rm(2)}] Act every Clifford operator on $\mathcal{M}_{i}$ and get all GPM sets that are UC-equivalent to $\mathcal{M}_{i}$ through Clifford operators.
Denote all of the aforementioned standard $2$-GPM sets as $\mathcal{CUC}(\mathcal{M}_{i})$, with $\mathcal{CUC}(\mathcal{M}_{0})$ being $\mathcal{CUC}(\mathcal{M})$.
\item[{\rm(3)}] The set $\mathcal{CU(M)}\triangleq\mathcal{CUC}(\mathcal{M})\cup\mathcal{CUC}(\mathcal{M}_{1})$ is just the set of all standard $2$-GPM sets that
are Clifford-operators-based equivalent to the given set $\mathcal{M}$. Here,
the sets $\mathcal{UC}(\mathcal{M}_{i})$ and $\mathcal{UC}(\mathcal{M}_{j})$ may be the same for different $i$ and $j$.
\end{enumerate}
\end{procedure}

Obviously, the procedure is also applicable to the standard $l (\ge2)$-GPM sets.
Now we show the Clifford-operators-based classification method of all standard 2-GPM sets  $\mathfrak{S}\triangleq\{\{(0,0), (s_1,t_1)\}\big|(s_1,t_1)\in \mathbb{Z}_{d}\times\mathbb{Z}_{d}, (0,0)\neq(s_1,t_1)\}$. The basic idea is to perform Clifford-operators-based classification from small to large in lexicographic order. For example, the 2-GPM set $\{(0,0), (0,1)\}$ is smaller than $\{(0,0), (1,0)\}$.

\begin{procedure}[Clifford-operators-based classification I]\label{proc3.2}
Let $\mathfrak{S}\triangleq\{\{(0,0), (s_1,t_1)\}\big|(s_1,t_1)\in \mathbb{Z}_{d}\times\mathbb{Z}_{d}, (0,0)\neq(s_1,t_1)\}$ be the set composed of all standard 2-GPM sets.
\begin{enumerate}
\item[{\rm(1)}] Obviously, the smallest 2-GPM set is $\mathcal{M}= \{(0,0), (0,1)\}$. Use Procedure \ref{proc3.1} to find the Clifford-operators-based equivalence class $\mathcal{CU}(\mathcal{M})$.
\item[{\rm(2)}] Select the smallest 2-GPM set $\mathcal{M}^{\prime}$ from the remaining 2-GPM sets $\mathfrak{S}\backslash\mathcal{CU}(\mathcal{M})$, and also find the Clifford-operators-based equivalence class $\mathcal{CU}(\mathcal{M}^{\prime})$.
\item[{\rm(3)}] By doing so, we find all the Clifford-operators-based classes of all 2-GPM sets.
\end{enumerate}
\end{procedure}

Obviously, the representative elements of equivalence classes in Program \ref{proc3.2} are all the smallest elements among them. The classification method is also applicable to general cases. We use the above classification method to provide the classification of all 35 standard 2-GPM sets on $\mathbb{C}^6$.

\begin{example}[Classification of 2-GPM sets on $\mathbb{C}^6$]\label{ex3.2.1}
Using Matlab to implement the classification method, we find three equivalence classes with representative elements as follows:
$\mathcal{M}_{1}=\{(0,0),(0,1)\},\mathcal{M}_{2}=\{(0,0),(0,2)\},\mathcal{M}_{3}=\{(0,0),(0,3)\}.$
\begin{enumerate}
\item[{\rm(1)}] The smallest 2-GPM set is $\mathcal{M}_{1}= \{(0,0), (0,1)\}$. Use Procedure \ref{proc3.1} to find the Clifford-operators-based equivalence class $\mathcal{CU}(\mathcal{M}_{1})$. Since every standard 2-GPM set contains an identity matrix, it is actually determined by a nontrivial GPM matrix. According to Lemma \ref{lem3.1}, two GPMs are UC-equivalent if and only if they have the same essential power. The GPM $(0,1)$ in $\mathcal{M}_{1}$ is UC-equivalent to $(0,5)$ in $(0,1)^{\dag}\mathcal{M}_{1}= \{(0,0), (0,5)\}$. So the Clifford-operators-based equivalence class $\mathcal{CU}(\mathcal{M}_{1})$ happens to be the UC-equivalence class of the GPM set $\mathcal{M}_{1}$. See Table \ref{tab3.1} for details.
\item[{\rm(2)}] Select the smallest 2-GPM set $\mathcal{M}_{2}= \{(0,0), (0,2)\}$ from the remaining 2-GPM sets $\mathfrak{S}\backslash\mathcal{CU}(\mathcal{M}_{1})$, and find the Clifford-operators-based equivalence class $\mathcal{CU}(\mathcal{M}_{2})$ through Procedure \ref{proc3.1}.
\item[{\rm(3)}] Select the smallest 2-GPM set $\mathcal{M}_{3}= \{(0,0), (0,3)\}$ from the remaining 2-GPM sets $\mathfrak{S}\backslash(\mathcal{CU}(\mathcal{M}_{1})\cup\mathcal{CU}(\mathcal{M}_{2}))$, and find the last Clifford-operators-based equivalence class $\mathcal{CU}(\mathcal{M}_{3})$ (also an UC-equivalent class).
\end{enumerate}
Therefore, all 35 standard 2-GPM sets on $\mathbb{C}^6$ can be classified into three equivalence classes, represented by elements $\mathcal{M}_{1}= \{(0,0), (0,1)\}$, $\mathcal{M}_{2}= \{(0,0), (0,2)\}$ and $\mathcal{M}_{3}= \{(0,0), (0,3)\}$, respectively (see Table \ref{tab3.1} for details).
\end{example}
\begin{table}[h]
\caption{\label{tab3.1} Classification of 2-GBS sets in $\mathbb{C}^6\otimes \mathbb{C}^6$}
\begin{tabular}{|c|c|}
\hline
&\{(0,0),(0,1)\},\{(0,0),(0,5)\},\{(0,0),(1,0)\},\{(0,0),(1,1)\},\\
&\{(0,0),(1,2)\},\{(0,0),(1,3)\},\{(0,0),(1,4)\},\{(0,0),(1,5)\},\\
$\mathcal{CU}(\mathcal{M}_{1})$&\{(0,0),(2,1)\},\{(0,0),(2,3)\},\{(0,0),(2,5)\},\{(0,0),(3,1)\},\\
(24 items)&\{(0,0),(3,2)\},\{(0,0),(3,4)\},\{(0,0),(3,5)\},\{(0,0),(4,1)\},\\
&\{(0,0),(4,3)\},\{(0,0),(4,5)\},\{(0,0),(5,0)\},\{(0,0),(5,1)\},\\
&\{(0,0),(5,2)\},\{(0,0),(5,3)\},\{(0,0),(5,4)\},\{(0,0),(5,5)\}.\\
\hline
$\mathcal{CU}(\mathcal{M}_{2})$&\{(0,0),(0,2)\},\{(0,0),(0,4)\},\{(0,0),(2,0)\},\{(0,0),(2,2)\},  \\
(8 items)&\{(0,0),(2,4)\},\{(0,0),(4,0)\},\{(0,0),(4,2)\},\{(0,0),(4,4)\}.  \\
\hline
$\mathcal{CU}(\mathcal{M}_{3})$&\{(0,0),(0,3)\},\{(0,0),(3,0)\},\{(0,0),(3,3)\}. \\
\hline
\end{tabular}
\end{table}

It should be noted that since there are only 35 standard 2-GPM sets on $\mathbb{C}^6$, according to Lemma \ref{lem3.1}, the classification can be achieved by directly calculating the essential powers without using Matlab. This can be seen from the essential powers of the GPMs in Table \ref{tab3.1} (The essential powers in the three equivalence classes are 1, 2, and 3).

\subsection{Classification of 3-GPM sets}
The classification of standard 3-GPM sets can be carried out by using the method described in the previous section. If we only need to obtain the representative elements of equivalence classes, we can present a more concise method based on the classification of 2-GPM sets. The representative elements of an equivalence class of 3-GPM sets can be obtained by adding a GPM to the representative elements of an equivalence class of 2-GPM sets. In this way, we can classify all 3-GPM sets more concisely based on the representative elements of the equivalence classes of 2-GPM sets.

\begin{procedure}[Classification method II]\label{proc3.3}
Let GPM sets $\mathcal{M}_{1},\cdots,\mathcal{M}_{k}$ be the smallest representative elements (as in Procedure \ref{proc3.2}) of equivalence classes (of 2-GPM sets) arranged in ascending order. We determine the representative elements of equivalence classes of 3-GPM sets based on these representative elements in order.
\begin{enumerate}
\item[{\rm(1)}] Adding a GPM to $\mathcal{M}_{i}$ $(i=1,\cdots,k)$ yields a 3-GPM set. Denote all such 3-GPM sets as $\mathfrak{S}_{1}$. Obviously, the 3-GPM set $\mathcal{M}_{1}\cup\{(0,2)\}= \{(0,0), (0,1), (0,2)\}$ is the smallest GPM set obtained by adding a GPM to $\mathcal{M}_{1}$. Use Procedure \ref{proc3.1} (applicable to general cases) to find the Clifford-operators-based equivalence class of the set $\mathcal{M}_{1}\cup\{(0,2)\}$ within the range of $\mathfrak{S}_{1}$, i.e., $\mathcal{CU}(\mathcal{M}_{1}\cup\{(0,2)\})\cap \mathfrak{S}_{1}$.
\item[{\rm(2)}] Select the smallest 3-GPM set $\mathcal{M}^{\prime}$ from the remaining 3-GPM sets $\mathfrak{S}_{1}\backslash\mathcal{CU}(\mathcal{M}_{1}\cup\{(0,2)\})$, and also find the Clifford-operators-based equivalence class within $\mathfrak{S}_{1}$, i.e., $\mathcal{CU}(\mathcal{M}^{\prime})\cap \mathfrak{S}_{1}$.
\item[{\rm(3)}] By doing so, we obtain all the representative elements of Clifford-operators-based classes of 3-GPM sets.
\end{enumerate}
\end{procedure}

It should be noted that Procedure \ref{proc3.3} is a classification method for the special 3-GPM sets $\mathfrak{S}_{1}$, rather than all standard 3-GPM sets. So this classification is more concise. Meanwhile, according to the previous analysis, the representative elements of the equivalence classes obtained by doing so is the same as that of Procedure \ref{proc3.2}.

There are a total of 595(=35(35-1)/2) standard 3-GBS sets in $\mathbb{C}^6\otimes \mathbb{C}^6$. The corresponding $\mathfrak{S}_{1}$ contains only 99(=34+33+32) standard 3-GBS sets. Therefore, the classification of $\mathfrak{S}_{1}$ is significantly simpler. See Example \ref{ex3.3.1} for details.

\begin{example}[Classification of 3-GBS sets in $\mathbb{C}^6\otimes \mathbb{C}^6$]\label{ex3.3.1}
Using Matlab to implement the classification method II, we get 9 equivalence classes with representative elements as follows (for the sake of simplicity, we have omitted the common element $(0,0)$):
\begin{eqnarray*}
\mathcal{N}_{1}=\{(0,1),(0,2)\},\mathcal{N}_{2}=\{(0,1),(0,3)\},\\
\mathcal{N}_{3}=\{(0,1),(1,0)\},\mathcal{N}_{4}=\{(0,1),(2,0)\},\\
\mathcal{N}_{5}=\{(0,1),(3,0)\},\mathcal{N}_{6}=\{(0,1),(3,2)\},\\
\mathcal{N}_{7}=\{(0,2),(0,4)\},\mathcal{N}_{8}=\{(0,2),(2,0)\},\\
\mathcal{N}_{9}=\{(0,3),(3,0)\}.
\end{eqnarray*}
\begin{enumerate}
\item[{\rm(1)}] Adding a GPM to $\mathcal{M}_{i}$ $(i=1,\cdots,3)$ in Example \ref{ex3.2.1} yields a 3-GPM set. Denote all 99 such 3-GPM sets as $\mathfrak{S}_{1}$. Obviously, the 3-GPM set $\mathcal{N}_{1}:=\mathcal{M}_{1}\cup\{(0,2)\}= \{(0,0), (0,1), (0,2)\}$ is the smallest GPM set in $\mathfrak{S}_{1}$. Use Procedure \ref{proc3.1} to find the Clifford-operators-based equivalence class of the set $\mathcal{N}_{1}$ within the range of $\mathfrak{S}_{1}$, i.e., $\mathcal{CU}(\mathcal{N}_{1})\cap \mathfrak{S}_{1}$. See Table \ref{tab3.2} for details.
\item[{\rm(2)}] Select the smallest 3-GPM set $\mathcal{N}_{2}:=\{(0,0), (0,1), (0,3)\}$ from the remaining 3-GPM sets $\mathfrak{S}_{1}\backslash\mathcal{CU}(\mathcal{N}_{1})$, and also find the Clifford-operators-based equivalence class $\mathcal{CU}(\mathcal{N}_{2})\cap \mathfrak{S}_{1}$.
\item[{\rm(3)}] By doing so, we find all 9 representative elements of Clifford-operators-based classes of 3-GPM sets, see Table \ref{tab3.2}.
\end{enumerate}
\end{example}
\begin{table}[h]
\caption{\label{tab3.2} Classification of 99 standard 3-GBS sets in $\mathbb{C}^6\otimes \mathbb{C}^6$ via method II}
\begin{tabular}{|c|c|}
\hline
$\mathcal{CU}(\mathcal{N}_{1})$&\{(0,1),(0,2)\},\{(0,1),(0,5)\},\{(0,2),(3,1)\},\{(0,2),(3,4)\}.\\
\hline
&\{(0,1),(0,3)\},\{(0,1),(0,4)\},\{(0,2),(0,3)\},\{(0,2),(0,5)\},\\
$\mathcal{CU}(\mathcal{N}_{2})$&\{(0,2),(3,0)\},\{(0,2),(3,2)\},\{(0,2),(3,3)\},\{(0,2),(3,5)\},\\
(22 items)&\{(0,3),(0,4)\},\{(0,3),(0,5)\},\{(0,3),(2,0)\},\{(0,3),(2,1)\},\\
&\{(0,3),(2,2)\},\{(0,3),(2,3)\},\{(0,3),(2,4)\},\{(0,3),(2,5)\},\\
&\{(0,3),(4,0)\},\{(0,3),(4,1)\},\{(0,3),(4,2)\},\{(0,3),(4,3)\},\\
&\{(0,3),(4,4)\},\{(0,3),(4,5)\}.\\
\hline
$\mathcal{CU}(\mathcal{N}_{3})$&\{(0,1),(1,0)\},\{(0,1),(1,1)\},\{(0,1),(1,2)\},\{(0,1),(1,3)\},  \\
(12 items)&\{(0,1),(1,4)\},\{(0,1),(1,5)\},\{(0,1),(5,0)\},\{(0,1),(5,1)\},  \\
&\{(0,1),(5,2)\},\{(0,1),(5,3)\},\{(0,1),(5,4)\},\{(0,1),(5,5)\}.  \\
\hline
&\{(0,1),(2,0)\},\{(0,1),(2,1)\},\{(0,1),(2,2)\},\{(0,1),(2,3)\},\\
$\mathcal{CU}(\mathcal{N}_{4})$&\{(0,1),(2,4)\},\{(0,1),(2,5)\},\{(0,1),(4,0)\},\{(0,1),(4,1)\},\\
(30 items)&\{(0,1),(4,2)\},\{(0,1),(4,3)\},\{(0,1),(4,4)\},\{(0,1),(4,5)\},\\
&\{(0,2),(1,0)\},\{(0,2),(1,1)\},\{(0,2),(1,2)\},\{(0,2),(1,3)\},\\
&\{(0,2),(1,4)\},\{(0,2),(1,5)\},\{(0,2),(2,1)\},\{(0,2),(2,3)\},\\
&\{(0,2),(2,5)\},\{(0,2),(4,1)\},\{(0,2),(4,3)\},\{(0,2),(4,5)\},\\
&\{(0,2),(5,0)\},\{(0,2),(5,1)\},\{(0,2),(5,2)\},\{(0,2),(5,3)\},\\
&\{(0,2),(5,4)\},\{(0,2),(5,5)\}.\\
\hline
&\{(0,1),(3,0)\},\{(0,1),(3,1)\},\{(0,1),(3,3)\},\{(0,1),(3,4)\},\\
$\mathcal{CU}(\mathcal{N}_{5})$&\{(0,3),(1,0)\},\{(0,3),(1,1)\},\{(0,3),(1,2)\},\{(0,3),(1,3)\},\\
(20 items)&\{(0,3),(1,4)\},\{(0,3),(1,5)\},\{(0,3),(3,1)\},\{(0,3),(3,2)\},\\
&\{(0,3),(3,4)\},\{(0,3),(3,5)\},\{(0,3),(5,0)\},\{(0,3),(5,1)\},\\
&\{(0,3),(5,2)\},\{(0,3),(5,3)\},\{(0,3),(5,4)\},\{(0,3),(5,5)\}.\\
\hline
$\mathcal{CU}(\mathcal{N}_{6})$&\{(0,1),(3,2)\},\{(0,1),(3,5)\}\}.\\
\hline
$\mathcal{CU}(\mathcal{N}_{7})$&\{(0,2),(0,4)\}.\\
\hline
$\mathcal{CU}(\mathcal{N}_{8})$&\{(0,2),(2,0)\},\{(0,2),(2,2)\},\{(0,2),(2,4)\},\{(0,2),(4,0)\},\\
(6 items)&\{(0,2),(4,2)\},\{(0,2),(4,4)\}.\\
\hline
$\mathcal{CU}(\mathcal{N}_{9})$&\{(0,3),(3,0)\},\{(0,3),(3,3)\}.\\
\hline
\end{tabular}
\end{table}

There are total of $6545$(=$C^{3}_{35}$) standard 4-GBS sets in $\mathbb{C}^6\otimes \mathbb{C}^6$. The corresponding $\mathfrak{S}_{1}$ only contains no more than 300 standard 4-GBS sets. So the classification of $\mathfrak{S}_{1}$ is even simpler.

Similar to Example \ref{ex3.3.1}, we provide the representative elements of the equivalence class of 4-GBS sets in $\mathbb{C}^6\otimes \mathbb{C}^6$, see the following Example \ref{ex3.3.2} for details.

\begin{example}[Classification of 4-GBS sets in $\mathbb{C}^6\otimes \mathbb{C}^6$]\label{ex3.3.2}
Using Matlab to implement  the classification method II, we find 31 equivalence classes with representative elements (sorted in lexicographic order and presented in complete form), see Table III.
\begin{table}[htbp]\label{tab3.3}
\centering
\caption{Representative elements of 31 equivalent  classes of 4-GBS sets  in $\mathbb{C}^6\otimes \mathbb{C}^6$}
\label{tab5.1}
\begin{tabular}{|c|c|c|c|c|c|}
\hline
$\mathcal{S}$&Elements of $\mathcal{S}$&$\mathcal{S}$&Elements of $\mathcal{S}$&$\mathcal{S}$&Elements of $\mathcal{S}$\\
\hline
$\mathcal{S}_{1}$&$I,Z,Z^{2},Z^{3}$&$\mathcal{S}_{2}$&$I,Z,Z^{2},Z^{4}$&$\mathcal{S}_{3}$&$I,Z,Z^{2},X$\\
\hline
$\mathcal{S}_{4}$&$I,Z,Z^{2},X^{2}$&$\mathcal{S}_{5}$&$I,Z,Z^{2},X^{2}Z$&$\mathcal{S}_{6}$&$I,Z,Z^{2},X^{3}$\\
\hline
$\mathcal{S}_{7}$&$I,Z,Z^{2},X^{3}Z$&$\mathcal{S}_{8}$&$I,Z,Z^{3},Z^{4}$&$\mathcal{S}_{9}$&$I,Z,Z^{3},X$\\
\hline
$\mathcal{S}_{10}$&$I,Z,Z^{3},X^{2}$&$\mathcal{S}_{11}$&$I,Z,Z^{3},X^{2}Z$&$\mathcal{S}_{12}$&$I,Z,Z^{3},X^{3}$\\
\hline
$\mathcal{S}_{13}$&$I,Z,Z^{3},X^{3}Z$&$\mathcal{S}_{14}$&$I,Z,Z^{3},X^{5}$&$\mathcal{S}_{15}$&$I,Z,X,XZ$\\
\hline
$\mathcal{S}_{16}$&$I,Z,X,XZ^{2}$&$\mathcal{S}_{17}$&$I,Z,X,XZ^{3}$&$\mathcal{S}_{18}$&$I,Z,X,X^{2}Z^{2}$\\
\hline
$\mathcal{S}_{19}$&$I,Z,X,X^{3}Z^{5}$&$\mathcal{S}_{20}$&$I,Z,X,X^{4}Z^{4}$&$\mathcal{S}_{21}$&$I,Z,X^{2},X^{2}Z$\\
\hline
$\mathcal{S}_{22}$&$I,Z,X^{2},X^{2}Z^{2}$&$\mathcal{S}_{23}$&$I,Z,X^{2},X^{2}Z^{5}$&$\mathcal{S}_{24}$&$I,Z,X^{2},X^{3}Z$\\
\hline
$\mathcal{S}_{25}$&$I,Z,X^{2},X^{4}$&$\mathcal{S}_{26}$&$I,Z,X^{3},X^{3}Z$&$\mathcal{S}_{27}$&$I,Z,X^{3},X^{3}Z^{3}$\\
\hline
$\mathcal{S}_{28}$&$I,Z,X^{3},X^{3}Z^{5}$&$\mathcal{S}_{29}$&$I,Z^{2},Z^{4},X^{2}$&$\mathcal{S}_{30}$&$I,Z^{2},X^{2},X^{2}Z^{2}$\\
\hline
$\mathcal{S}_{31}$&$I,Z^{3},X^{3},X^{3}Z^{3}$&$$&$$&$$&$$\\
\hline
\end{tabular}
\end{table}
\end{example}

There are a total of 52360(=$C^{4}_{35}$) standard 5-GBS sets in $\mathbb{C}^6\otimes \mathbb{C}^6$, and the corresponding $\mathfrak{S}_{1}$ contains about 900 standard 5-GBS sets. We provide all the 112 representative elements of the equivalence class of 5-GBS sets in $\mathbb{C}^6\otimes \mathbb{C}^6$ in Appendix A.

\section{Minimum cardinality of one-way LOCC indistinguishable GBS sets in $\mathbb{C}^6\otimes \mathbb{C}^6$}\label{sec4}
Recall that the cardinality function $f_{GBS}(d)$ stands for the minimum cardinality of one-way LOCC indistinguishable GBS sets in $\mathbb{C}^d\otimes \mathbb{C}^d$, whereas $f_{GBS}(6)$ is still unknown. In this section, we prove that $f_{GBS}(6)=4$ based on the classification of 4-GPM sets on $\mathbb{C}^6$ given in Table III.

As GBSs are maximally entangled, their reduced density operators are all proportional to the identity operator, which makes it difficult to distinguish GBSs by using the usual LOCC measurements. Fortunately, we have a sufficient and necessary condition for one-way LOCC discrimination via quantum teleportation \cite{gho2004pra,band2011njp,zhang2015pra}:
{\it An $l$-GBS set $\{|\Phi_{m_{j},n_{j}}\rangle\}_{j=1}^{l}$ in $\mathbb{C}^{d}\otimes\mathbb{C}^{d}$ is one-way LOCC distinguishable
if and only if there exists at least one state $|\alpha\rangle$ for which the states $X^{m_{j}}Z^{n_{j}}|\alpha\rangle$ ($j=1,\cdots,l$) are pairwise orthogonal.}
The key quantum state $|\alpha\rangle$ is generally difficult to identify and is currently typically taken as the eigenvector of a suitable GPM (see \cite{li2022pra,wang2025pra}).
We have the following observation.

{\bf Observation} The 4-GBS set $\mathcal{S}_{30}=\{(0,0),(0,2),(2,0),(2,2)\}$ in Table III is one-way LOCC indistinguishable, and then $f_{GBS}(6)=4$.

The proof of the Observation is given by contradiction. Assume that $\mathcal{S}_{30}$ is one-way LOCC distinguishable. Then there exists one state $|\alpha\rangle$ such that the states $X^{m_{j}}Z^{n_{j}}|\alpha\rangle$ ($X^{m_{j}}Z^{n_{j}}\in \mathcal{S}_{30}$) are pairwise orthogonal. Alice and Bob use the teleportation scheme to teleport the state $|\alpha\rangle$ to Bob via the unknown GBS $|\Phi_{m_i,n_i}\rangle$, which is to be identified. The state in Bob$^{,}$s side after the teleportation is actually $X^{m_{i}}Z^{n_{i}}|\alpha\rangle$. Four possible quantum states $X^{m_{j}}Z^{n_{j}}|\alpha\rangle$ are mutually orthogonal, forming a set of orthogonal relations $\langle\alpha|U_{m,n}|\alpha\rangle=0$ for $(m, n)\in\Delta \mathcal{S}_{30}$, which corresponds to a homogeneous system of equations with the unit vector $|\alpha\rangle$ as a solution. According to the properties of homogeneous systems of equations, if there are enough effective equations, the system will have a unique solution (the trivial solution). This contradicts the existence of a non-zero solution such as $|\alpha\rangle$, leading us to conclude that set $\mathcal{S}_{30}$ is  one-way LOCC indistinguishable. See Appendix B for the complete proof.

\section{Conclusion}\label{sec6}
We have investigated the local unitary classification of all sets of generalized Bell states (GBSs) in bipartite quantum system $\mathbb{C}^{d}\otimes \mathbb{C}^{d}$ ($d\geq 3$), as well as the cardinality problem of the smallest indistinguishable GBS sets in $\mathbb{C}^{6}\otimes \mathbb{C}^{6}$ under one-way local operations and classical communication. The LU-operations considered here are completely determined by the left multiplication and the Clifford operator, namely, the so-called Clifford-operators-based operations. We have presented two Clifford-operators-based classification methods for GBS sets: one is to directly classify all GBS sets, the other one is to classify the larger GBS sets based on the classification of the smaller GBS sets. Using these two classification methods, we have provided the Clifford-operators-based classification of all 2-GBS sets and representative elements of equivalence classes of 3-GBS sets, 4-GBS sets and 5-GBS sets on $\mathbb{C}^{6}$.

It is found that there are total of 9 equivalence classes of 3-GBS sets, 31 equivalence classes of 4-GBS sets, and 112 equivalence classes of 5-GBS sets. The LU-equivalence classification is the foundation for completely solving the local discrimination problem of GBS sets. Therefore, based on the classification of 4-GBS sets, we have shown that the representative element of the 30th equivalence class, i.e., the 4-GBS set $\mathcal{S}_{30}=\{(0,0),(0,2),(2,0),(2,2)\}$, is one-way LOCC indistinguishable. This indicates that the minimum cardinality of one-way LOCC indistinguishable GBS sets in $\mathbb{C}^{6}\otimes \mathbb{C}^{6}$ is 4. A comprehensive investigation of the local discrimination problem of all 4-GBS sets in $\mathbb{C}^{6}\otimes \mathbb{C}^{6}$ requires LU-classification and detectors \cite{wang2025pra}. Our classification methods proposed in this work may also be useful in resolving other related problems such as quantum state discrimination.

\section*{Appendix A: 112 representative elements of 5-GBS sets in $\mathbb{C}^{6}\otimes \mathbb{C}^{6}$ (Table IV)}\label{sec1}
\begin{widetext}
\begin{table*}[hbtp]
\caption{Representative elements of 112 equivalent  classes of 5-GBS sets in $\mathbb{C}^{6}\otimes \mathbb{C}^{6}$ (common element (0,0) is omitted)}
\footnotesize
\begin{tabular}{|c|}
\hline
$\{(0,1),(0,2),(0,3),(0,4)\},\{(0,1),(0,2),(0,3),(1,0)\},\{(0,1),(0,2),(0,3),(2,0)\},\{(0,1),(0,2),(0,3),(2,1)\},\{(0,1),(0,2),(0,3),(3,0)\},$\\
$\{(0,1),(0,2),(0,3),(3,1)\},\{(0,1),(0,2),(0,4),(1,0)\},\{(0,1),(0,2),(0,4),(2,0)\},\{(0,1),(0,2),(0,4),(2,1)\},\{(0,1),(0,2),(0,4),(3,0)\},$\\
$\{(0,1),(0,2),(0,4),(3,1)\},\{(0,1),(0,2),(1,0),(1,1)\},\{(0,1),(0,2),(1,0),(1,2)\},\{(0,1),(0,2),(1,0),(1,3)\},\{(0,1),(0,2),(1,0),(2,0)\},$\\
$\{(0,1),(0,2),(1,0),(2,1)\},\{(0,1),(0,2),(1,0),(2,2)\},\{(0,1),(0,2),(1,0),(2,3)\},\{(0,1),(0,2),(1,0),(2,5)\},\{(0,1),(0,2),(1,0),(3,0)\},$\\
$\{(0,1),(0,2),(1,0),(3,1)\},\{(0,1),(0,2),(1,0),(3,2)\},\{(0,1),(0,2),(1,0),(3,3)\},\{(0,1),(0,2),(1,0),(3,4)\},\{(0,1),(0,2),(1,0),(3,5)\},$\\
$\{(0,1),(0,2),(1,0),(4,0)\},\{(0,1),(0,2),(1,0),(4,1)\},\{(0,1),(0,2),(1,0),(4,2)\},\{(0,1),(0,2),(1,0),(4,3)\},\{(0,1),(0,2),(1,0),(4,4)\},$\\
$\{(0,1),(0,2),(1,0),(4,5)\},\{(0,1),(0,2),(1,0),(5,1)\},\{(0,1),(0,2),(1,0),(5,2)\},\{(0,1),(0,2),(1,0),(5,5)\},\{(0,1),(0,2),(2,0),(2,1)\},$\\
$\{(0,1),(0,2),(2,0),(2,2)\},\{(0,1),(0,2),(2,0),(2,3)\},\{(0,1),(0,2),(2,0),(2,5)\},\{(0,1),(0,2),(2,0),(3,0)\},\{(0,1),(0,2),(2,0),(3,1)\},$\\
$\{(0,1),(0,2),(2,0),(3,2)\},\{(0,1),(0,2),(2,0),(4,0)\},\{(0,1),(0,2),(2,0),(4,1)\},\{(0,1),(0,2),(4,2),(2,5)\},\{(0,1),(0,2),(2,0),(4,5)\},$\\
$\{(0,1),(0,2),(2,1),(2,3)\},\{(0,1),(0,2),(2,1),(3,0)\},\{(0,1),(0,2),(2,1),(3,2)\},\{(0,1),(0,2),(2,1),(4,1)\},\{(0,1),(0,2),(3,0),(3,1)\},$\\
$\{(0,1),(0,2),(3,0),(3,2)\},\{(0,1),(0,2),(3,0),(3,3)\},\{(0,1),(0,2),(3,0),(3,5)\},\{(0,1),(0,2),(3,1),(3,4)\},\{(0,1),(0,3),(0,4),(1,0)\},$\\
$\{(0,1),(0,3),(0,4),(2,0)\},\{(0,1),(0,3),(0,4),(3,0)\},\{(0,1),(0,3),(1,0),(1,1)\},\{(0,1),(0,3),(1,0),(1,2)\},\{(0,1),(0,3),(1,0),(1,3)\},$\\
$\{(0,1),(0,3),(1,0),(2,1)\},\{(0,1),(0,3),(1,0),(2,2)\},\{(0,1),(0,3),(1,0),(2,4)\},\{(0,1),(0,3),(1,0),(3,0)\},\{(0,1),(0,3),(1,0),(3,1)\},$\\
$\{(0,1),(0,3),(1,0),(3,3)\},\{(0,1),(0,3),(1,0),(3,4)\},\{(0,1),(0,3),(1,0),(4,2)\},\{(0,1),(0,3),(1,0),(4,3)\},\{(0,1),(0,3),(1,0),(4,5)\},$\\
$\{(0,1),(0,3),(1,0),(5,1)\},\{(0,1),(0,3),(1,0),(5,3)\},\{(0,1),(0,3),(1,0),(5,4)\},\{(0,1),(0,3),(1,0),(5,5)\},\{(0,1),(0,3),(2,0),(2,1)\},$\\
$\{(0,1),(0,3),(2,0),(2,2)\},\{(0,1),(0,3),(2,0),(2,5)\},\{(0,1),(0,3),(2,0),(5,0)\},\{(0,1),(0,3),(2,0),(5,2)\},\{(0,1),(0,3),(2,1),(2,3)\},$\\
$\{(0,1),(0,3),(2,1),(3,0)\},\{(0,1),(0,3),(2,1),(4,1)\},\{(0,1),(0,3),(2,1),(5,0)\},\{(0,1),(0,3),(3,0),(3,1)\},\{(0,1),(0,3),(3,0),(3,3)\},$\\
$\{(0,1),(0,3),(3,0),(5,1)\},\{(0,1),(0,3),(5,0),(5,1)\},\{(0,1),(0,3),(5,0),(5,2)\},\{(0,1),(0,3),(5,0),(5,3)\},\{(0,1),(1,0),(1,1),(2,3)\},$\\
$\{(0,1),(1,0),(1,1),(2,4)\},\{(0,1),(1,0),(1,2),(1,4)\},\{(0,1),(1,0),(1,2),(2,1)\},\{(0,1),(1,0),(1,2),(3,1)\},\{(0,1),(1,0),(1,2),(3,2)\},$\\
$\{(0,1),(1,0),(1,2),(3,3)\},\{(0,1),(1,0),(1,2),(3,5)\},\{(0,1),(1,0),(1,2),(5,3)\},\{(0,1),(1,0),(1,3),(2,2)\},\{(0,1),(1,0),(1,3),(2,3)\},$\\
$\{(0,1),(1,0),(1,3),(4,4)\},\{(0,1),(1,0),(2,2),(2,3)\},\{(0,1),(1,0),(2,2),(4,4)\},\{(0,1),(1,0),(2,2),(4,5)\},\{(0,1),(2,0),(2,1),(4,0)\},$\\
$\{(0,1),(2,0),(2,2),(2,4)\},\{(0,1),(2,0),(2,2),(3,1)\},\{(0,1),(2,0),(2,2),(4,4)\},\{(0,1),(2,0),(3,1),(3,4)\},\{(0,1),(2,0),(3,1),(4,0)\},$\\
$\{(0,2),(0,4),(2,0),(2,2)\},\{(0,2),(0,4),(2,0),(4,0)\}.$\\
\hline
\end{tabular}
\end{table*}
\end{widetext}

\section*{Appendix B: Proof of Observation}\label{sec2}
\begin{proof}
The difference set is given by $\Delta\mathcal{S}_{30}=\{(0,2),(0,4),(2,0),(2,2),(2,4),(4,0),(4,2),(4,4)\}$.
Suppose that $\mathcal{S}_{30}$ is one-way LOCC distinguishable, then there exists a normalized vector $|\alpha\rangle=\sum_{j=0}^{d-1}\alpha_{j}|j\rangle$
such that $\{X^{m_{j}}Z^{n_{j}}|\alpha\rangle\big|(m_{j},n_{j})\in \mathcal{S}_{30}\}$ is an orthonormal set, which implies that $\langle\alpha|U_{m,n}|\alpha\rangle=0$ for $(m, n)\in\Delta \mathcal{S}_{30}$. Let $|\omega_{j}\rangle\triangleq|(1,\omega^{j},\cdots,\omega^{5j})\rangle$ and
$\{|\omega_{j}\rangle\}_{j=0}^{5}$ be a base of $\mathbb{C}^{6}$.
The condition $\Delta \mathcal{S}_{30}\supseteq\{(0,2),(0,4)\}$ implies that
$0=\langle\alpha|Z^{i}|\alpha\rangle=\langle(|\alpha_{j}|^{2})_{j=0}^{d-1}|\omega_{i}\rangle$ for $i=2,4.$ Since $\langle(|\alpha_{j}|^{2})_{j=0}^{d-1}|\omega_{2}\rangle
=\langle(|\alpha_{0}|^{2}+|\alpha_{3}|^{2},|\alpha_{1}|^{2}+|\alpha_{4}|^{2},|\alpha_{2}|^{2}+|\alpha_{5}|^{2})|(1,\omega^{2},\omega^{4})\rangle$
and
$\langle(|\alpha_{j}|^{2})_{j=0}^{d-1}|\omega_{4}\rangle
=\langle(|\alpha_{0}|^{2}+|\alpha_{3}|^{2},|\alpha_{1}|^{2}+|\alpha_{4}|^{2},|\alpha_{2}|^{2}+|\alpha_{5}|^{2})|(1,\omega^{4},\omega^{2})\rangle$,
it can be concluded that the vector
$|(|\alpha_{0}|^{2}+|\alpha_{3}|^{2},|\alpha_{1}|^{2}+|\alpha_{4}|^{2},|\alpha_{2}|^{2}+|\alpha_{5}|^{2})\rangle$
is proportional to $|(1,1,1)\rangle$, and we can set $|\alpha_{i}|^{2}+|\alpha_{i+3}|^{2}=k> 0$, $i=0,1,2$.

The condition $\Delta \mathcal{S}_{30}\supseteq \{(2,0),(2,2),(2,4)\}$ implies that
$0=\langle\alpha|X^{2}Z^{i}|\alpha\rangle=\Sigma_{j=0}^{5}\overline{\alpha}_{j+2}\omega^{ij}\alpha_{j}
=\langle(\overline{\alpha}_{j}\alpha_{j+2})_{j=0}^{5}|\omega_{i}\rangle$ for $i=0,2,4$.
Since $\langle(\overline{\alpha}_{j}\alpha_{j+2})_{j=0}^{5}|\omega_{i}\rangle=
\langle(\overline{\alpha}_{0}\alpha_{2}+\overline{\alpha}_{3}\alpha_{5},\overline{\alpha}_{1}\alpha_{3}+\overline{\alpha}_{4}\alpha_{0},
\overline{\alpha}_{2}\alpha_{4}
+\overline{\alpha}_{5}\alpha_{1})|(1,\omega^{2i},\omega^{4i})\rangle=0$, $i=0,2,4,$
it can be concluded that the vector
$|(\overline{\alpha}_{0}\alpha_{2}+\overline{\alpha}_{3}\alpha_{5},\overline{\alpha}_{1}\alpha_{3}+\overline{\alpha}_{4}\alpha_{0},
\overline{\alpha}_{2}\alpha_{4}+\overline{\alpha}_{5}\alpha_{1})\rangle$ is a zero vector.

Therefore, the three non-zero vectors $|(\alpha_{0},\alpha_{3})\rangle$, $|(\alpha_{2},\alpha_{5})\rangle$ and $|(\alpha_{4},\alpha_{1})\rangle$
are mutually orthogonal in $\mathbb{C}^{2}$, which is impossible and $\mathcal{S}_{30}$ is one-way local indistinguishable.
\end{proof}

\begin{acknowledgments}
This work is supported by NSFC (Grant No. 11971151) and Wuxi University High-level Talent Research Start-up Special Fund.
\end{acknowledgments}

\nocite{*}

\section*{Data availability}
All data needed to evaluate the conclusions in the paper are present in the paper and/or the Supplementary Information.

\section*{Author contributions}
All authors contributed to each part of this work equally.
All authors read and approved the final manuscript.

\section*{Competing interests}
The authors declare no competing interests.


\begin{thebibliography}{}
\bibitem{raha2015pra}
R. Rahaman and M. G. Parker, Quantum scheme for secret sharing based on local distinguishability, Phys. Rev. A 91, 022330 (2015).
\bibitem{yang2015sci}
Y. H. Yang, F. Gao, X. Wu, S. J. Qin, H. J. Zuo, and Q. Y. Wen, Quantum secret sharing via local operations and classical communication, Sci. Rep. 5, 16967 (2015).
\bibitem{fan2004prl}
H. Fan,  Distinguishability and indistinguishability by local operations and classical communication, Phys. Rev. Lett. \textbf{92}, 177905 (2004).
\bibitem{nath2005jmp}
M. Nathanson, Distinguishing bipartite orthogonal states by LOCC: best and worst cases, J. Math. Phys. \textbf{46}, 062103 (2005).
\bibitem{band2011njp}
S. Bandyopadhyay, S. Ghosh, and G. Kar, LOCC distinguishability of unilaterally transformable quantum states, New J. Phys. \textbf{13}, 123013 (2011).
\bibitem{wang2019pra}
Y. L. Wang, M. S. Li and Z. X. Xiong, One-way local distinguishability of generalized Bell states in arbitrary dimension, Phys. Rev. A \textbf{99}, 022307 (2019).
\bibitem{yang2018pra}
Y. H. Yang, J. T. Yuan, C. H. Wang, S. J. Geng, Locally indistinguishable generalized Bell states with one-way local operations and classical communication,
Phys. Rev. A 98, 042333 (2018)
\bibitem{zhang2014qip}
Z. C. Zhang, Q. Y. Wen, F. Gao, G. J. Tian and T. Q. Cao,
One-way LOCC indistinguishability of maximally entangled states, Quant. Info. Proc. \textbf{13}, 795 (2014).
\bibitem{sing2017pra}
T. Singal, R. Rahman, S. Ghosh, and G. Kar, Necessary condition for local distinguishability of maximally entangled states: Beyond orthogonality preservation, Phys. Rev. A \textbf{96}, 042314 (2017).
\bibitem{tian2016pra}
G. J. Tian, S. X. Yu, F. Gao, Q. Y. Wen and C. H. Oh,  Classification of locally distinguishable and indistinguishable sets of
maximally entangled states, Phys. Rev. A \textbf{94}, 052315 (2016).
\bibitem{hos2005pra}
E. Hostens, J. Dehaene and B. De Moor, Stabilizer states and Clifford operations for systems of arbitrary dimensions and modular arithmetic,
Phys. Rev. A \textbf{71}, 042315 (2005).
\bibitem{fari2014jpa}
J. M. Farinholt, An ideal characterization of the Clifford operators,
J. Phys. A: Math. Theor.  \textbf{47}, 305303 (2014).
\bibitem{yuan2022quantum}
J. T. Yuan, Y. H. Yang and C. H. Wang, Finding out all locally indistinguishable sets of generalized Bell state,
Quantum \textbf{6}, 763 (2022).
\bibitem{wang2021jmp}
C. H. Wang, J. T. Yuan, Y. H. Yang and G. F. Mu, Local unitary classification of generalized Bell state sets in $\mathbb{C}^{5}\otimes\mathbb{C}^{5}$,
J. Math. Phys. \textbf{62}, 032203 (2021).
\bibitem{zhang2015pra}
Z. C. Zhang, K. Q. Feng, F. Gao and Q. Y. Wen, Distinguishing maximally entangled states by one-way local operations and classical communication, Phys. Rev. A \textbf{91}, 012329 (2015).
\bibitem{wang2016qip}
Y. L. Wang, M. S. Li, Z. J. Zheng and S. M. Fei,
On small set of one-way LOCC indistinguishability of maximally entangled states, Quant. Info. Proc. \textbf{15}, 1661 (2016).
\bibitem{gho2004pra}
S. Ghosh, G. Kar, A. Roy, and D. Sarkar,
Distinguishability of maximally entangled states, Phys. Rev. A \textbf{70}, 022304 (2004).
\bibitem{wang2017qip}
Y. L. Wang, M. S. Li, S. M. Fei and Z. J. Zheng,
The local distinguishability of any three generalized Bell states, Quant. Info. Proc. \textbf{16}, 126 (2017).
\bibitem{yuan2019qip}
J. T. Yuan, C. H. Wang, Y. H. Yang and S. J. Geng,
Constructions of one-way LOCC indistinguishable sets of generalized Bell states, Quant. Info. Proc. \textbf{18}, 145 (2019).
\bibitem{wu-tian2018pra}
B. J. Wu, J. Q. Jiang, J. L. Jiang, G. J. Tian and S. X. Ming, Local unitary classification for sets of generalized Bell states, Phys. Rev. A \textbf{98}, 022304 (2018).
\bibitem{hashi2021pra} T. Hashimoto, M. Horibe, and A. Hayashi,
Simple criterion for local distinguishability of generalized Bell states in prime dimension, Phys. Rev. A \textbf{103}, 052429 (2021).
\bibitem{li2022pra} M. S. Li, Y. L. Wang and F. Shi,
Local discrimination of generalized Bell states via commutativity, Phys. Rev. A \textbf{105}, 032455 (2022).
\bibitem{wang2025pra} C. H. Wang, J. T. Yuan, Y. H. Yang, M. S. Li, S. M. Fei and Z. H. Ma,
Detectors for local discrimination of sets of generalized Bell states, under review.

\end{thebibliography}

\end{document}